\documentclass{article}
\usepackage[utf8]{inputenc}

\usepackage{style}

\newcommand{\cD}{\mathcal{D}}
\renewcommand{\H}{\mathcal{H}}
\newcommand{\Oh}{\mathcal{O}}
\newcommand{\Alg}{\mathrm{Alg}}
\newcommand{\Test}{\mathrm{Test}}

\title{The Generic Holdout:\\Preventing False-Discoveries in
Adaptive Data Science}
\author{
Preetum Nakkiran\thanks{Harvard John A. Paulson School of Engineering and Applied Sciences, Harvard University, 33 Oxford Street,
Cambridge, MA 02138, USA. Email: {\tt preetum@cs.harvard.edu}. Work supported in part by a Simons Investigator Award, NSF Award CCF 1715187, and the NSF Graduate Research Fellowship Grant No. DGE1144152.}
\and
Jaros\l aw B\l asiok \thanks{Harvard John A. Paulson School of Engineering and Applied Sciences, Harvard University, 33 Oxford Street,
Cambridge, MA 02138, USA. Email: {\tt jblasiok@g.harvard.edu.} Supported by ONR grant N00014-15-1-2388.}
}
\date{}

\begin{document}

\maketitle

\begin{abstract}

The traditional framework of science is ``non-adaptive'', in the sense that the
scientist first fixes a hypothesis, and then collects data to test it.
However, modern science is often ``adaptive'', in that first large amounts of data are collected, then the scientist
explores this data to propose hypotheses.
Such adaptive data analysis has posed a challenge to science due to its ability to generate false hypotheses on moderately large data sets. In general, with non-adaptive data analyses (where the  queries to the data are generated without being influenced by answers to previous queries) a data set containing $n$ samples may support exponentially many queries in $n$.
This number reduces to linearly many under naive adaptive data analysis,
and even sophisticated remedies such as the Reusable Holdout (Dwork et. al 2015) only allow quadratically many queries in $n$.

In this work, we propose a new framework for adaptive science which
exponentially improves on this number of queries under a restricted yet
scientifically relevant setting in data analysis, where the goal of the scientist is to find a single (or a few) true hypotheses about the universe based on the samples.
Such a setting may describe the search for predictive factors of some disease
based on medical data, where the analyst may wish to try a number of predictive models until a satisfactory one is found.

Our solution, which we refer to as the \emph{Generic Holdout} methodology, involves two extremely simple ingredients: (1) a partitioning of the data into a exploration set and a holdout set (a methodology that is already widely practiced) and (2) a limited exposure strategy for the holdout set (something that is widely violated, but easy to fix).
An analyst is free to use the exploration set arbitrarily, but when testing hypotheses against the holdout set, the analyst only learns the answer to the question: ``Is the given hypothesis true (empirically) on the holdout set?''
It is common for holdout sets to provide much more information, such as ``how well'' the hypothesis fit the holdout set, and this is where a common fallacy lies.
The resulting scheme is immediate to analyze, and reverts to the setting where exponentially many hypotheses can be tested against the data. Despite this simplicity we do not believe our method is obvious, as evidenced by the many violations in practice.

Our proposal is best seen as an alternative to \emph{pre-registration},
where journals require scientists to commit to their hypotheses and analysis procedure before collecting any data.
Pre-registration preserves statistical validity, but at the cost of adaptivity: researchers are not allowed to explore the structure of data to generate hypotheses.
In this setting, the Generic Holdout allows researchers to get the benefits of adaptive data analysis, without the problems of adaptivity.

\end{abstract}

\section{Introduction}
In science, it is natural to first collect data, and then form informed hypotheses based on it. This is arguably how much of science was historically done --- after all, one is unlikely to come up with a correct theory of physics without first observing the world.
In order for the result to be statistically valid, a scientist must then collect independent data, after fixing their hypothesis.
However, this is not done in practice: in modern experimental science,
it is common to first collect data, and then explore it to generate plausible hypotheses.
This is what we consider \emph{adaptive} science: wherein the scientist generates hypotheses after somehow interacting with the data set.
Doing this naively could lead to being convinced of false hypotheses, since a
scientist may ``overfit'' --- that is, derive a false hypothesis that appears to be true on
the data.
This could occur if the scientist implicitly or explicitly tests many hypotheses
before finding one that happens to fit the experimental data.
For example, if the first hypothesis tested proves false on the data set,
a scientist may want to use information about this first hypothesis test to revisit the data set,
and form a new hypothesis, and so on.

There is growing recognition in the sciences that this ``adaptive'' method of doing science
is not statistically sound, and leads to invalid claims.
There is a recognized ``reproducibility crisis'' in Psychology, for example, after a collaboration failed to replicate 62 out of 97 studies with positive findings
in three prominent psychology journals \cite{reproduce}.
The problem of adaptivity (also known as ``p-hacking'' or ``researcher degrees of freedom'')
is recognized as a key contributor to this crisis,
since it is often not correctly accounted for in standard statistical analyses \cite{crisis, flex-psych}.
Other experimental areas such as neuroscience~\cite{neuro-crisis}, economics~\cite{econ-crisis}, and social sciences~\cite{social-crisis} are also aware of this issue,
after conducting reproducibility studies following the example of Psychology.
In general, any methodology in which a researcher decides which hypothesis to test after somehow interacting with the data set is susceptible to this
problem of adaptivity.
Note that this includes scenarios like testing a second hypothesis after the initial hypothesis test failed, or doing a ``data-dependent analysis'' in which the hypothesis formed depends on some structure of the data set itself. One notable example of such data-dependent analysis is using Principal Component Analysis on the data set to find a correlation structure, and using this in turn to define the hypothesis, that is tested against the same data set.
The review article \cite{crisis} provides many further examples of this problem of adaptivity arising in science. We provide an explicit, formal example of this problem in Section~\ref{sec:problems}.

The preceding discussion motivates our scientific goal:
we would like to have a statistically-valid scientific methodology that allows researchers to explore their data before generating hypotheses.
{\bf We are considering the setting where the scientist is trying to derive a small number of true hypotheses, and would like to adaptively check if the proposed hypotheses are true.}
We want to guarantee that in this process, false hypotheses which are proposed are unlikely to be validated as true --- that is, we would like to prevent false discoveries.
Before describing our proposal, we first briefly discuss three proposed solutions to the problem of adaptivity in the sciences.

{\bf 1. Preregistration. } There has been a recent push in the scientific community towards preregistration: requiring scientists to commit to their scientific methods and hypotheses before conducting a study.
Several prominent journals now encourage scientists to preregister their research, in order to preserve statistical rigor (for example: \emph{Open Science} of the Royal Society,
and \emph{Psychological Science} of the Association for Psychological Science).
An open  letter published by more than 80 signatories calls for preregistration in the sciences ~\cite{open-letter}.
Preregistration does preserve statistical validity, but at the cost of adaptivity: the researcher is not allowed to explore the structure of data to generate hypotheses, and thus preregistration has been critiqued for slowing down the overall scientific process.

{\bf 2. Naive Holdout. }
Keeping a holdout set is another way of addressing the adaptivity problem.
The idea is, the scientist holds out part of the data set (without looking at it to form hypotheses), and is free to explore the remainder of the data set.
Then, after exploring and proposing a hypothesis, the scientist checks the hypothesis against the holdout set. This is statistically valid, since the holdout data is independent of the hypothesis being tested, but has the disadvantage that the holdout can only be used once:
if the scientist now wants to test another hypothesis, he/she must collect
additional independent data to use as a holdout.
This is sample-inefficient and impractical in settings where collecting data is expensive.
In particular, the naive holdout can only handle linearly many hypothesis tests
in the total size of the holdout sets.
Note that it is not valid to naively re-use the same holdout set
after seeing the results of the first hypothesis test (say, its $p$-value) --- this can quickly lead to overfitting on the holdout set,
in a way made precise in an extended example in Section~\ref{sec:problems}.

{\bf 3. Reusable Holdout. }
The Reusable Holdout ~\cite{reusable,reuse2}, a recent development in the field of Adaptive Data Analysis,
manages to improve on the naive holdout, and handles up to quadratically many hypothesis tests in the holdout size\footnote{See Section~\ref{sec:related} for a more complete discussion and comparison with the Reusable Holdout.}.
The insight of the Reusable Holdout was to leak less information than the naive holdout,
in part by only releasing a noisy estimate of how well the hypothesis fits the
holdout set (instead of, say, the exact $p$-value), thus preventing overfitting to the holdout set.

We extend the reusable holdout idea, and propose that the holdout set should in fact leak no information about the hypothesis test, except for what is absolutely necessary: whether the hypothesis passed or not on the holdout set. In particular, it should \emph{not} release any indication of ``how well'' the hypothesis fits the holdout set, such as a $p$-value.
By this simple modification, our proposal allows {\bf exponentially-many} adaptively-chosen false hypotheses to be invalidated, before the scientist discovers a true hypothesis.
Moreover, it comes at no cost: in our scientific setting, there is no reason to leak more information from the holdout set, since ultimately we are only interested in preventing false discoveries. If we want to report $p$-values for the confirmed hypothesis, this can be done by simply keeping another holdout set, to use after the Generic Holdout, specifically for the purpose of finding the $p$-values for validated hypotheses that will be published.

{\bf Proposed Method: The Generic Holdout. } To recap, our method is simple: first, the scientist collects the data, and keeps a holdout set for validation (without looking at it).
The scientist is free to explore the rest of the data (\emph{exploration set}) to come up with hypotheses
that he/she deems plausible. Each time the scientist proposes a hypothesis, the validation procedure only returns ``True'' or ``False'': whether the hypothesis passed validation on the holdout set, or not --- it should not release more information, like the $p$-value of the hypothesis. The scientist can revisit the rest of the data, and adaptively propose hypotheses, and continue until he/she proposes a hypothesis that is confirmed by validation.
In general, the scientist can continue this process until a small number of hypotheses are confirmed.

We stress that our method applies specifically to the case where the scientist's goal is to derive a small number of true hypotheses, and will stop adaptively proposing hypotheses once several of them are validated to be true.

These ideas are not technically novel, but we are not aware of the problem of how to prevent false discoveries in science being phrased and addressed with such minimal assumptions, and with the guarantees we provide.

\begin{remark}
Here we point out a benefit of the Generic Holdout, and clarify the sense in which it is ``adaptive,'' by contrasting it with a related method.

An alternative way of using the holdout data set to provide a statistically sound and sample efficient methodology is the following:
The data analyst interacts with the exploration set in an arbitrary way, and without looking at the holdout set produces a family of hypotheses, that is then simultaneously validated against the holdout set.
In fact, this scenario is technically very similar to the Generic Holdout (as described below), but practically very different.

Indeed, let us consider the following thought experiment. The scientist is using the exploration set to come up with a hypotheses, and yet whenever they have a hypothesis in mind, the scientist pretends that it has been invalidated on the holdout set, and proceeds to come up with the next hypothesis. Eventually, he/she would come up with a family of hypotheses independent of the holdout set, and could validate all of those simultaneously. This is exactly the same sequence that would be generated in the real interaction with the Generic Holdout, except potentially longer.

However, implementing the above thought experiment in practice is
infeasible:
generating the set of hypotheses up front requires the scientist to simulate their behavior in the (hypothetical) case that every hypothesis they propose to the holdout is false.
This is infeasible in settings where generating hypotheses is very expensive (in CPU-hours, or scientist-hours),
or settings where the scientist cannot properly simulate themselves.
Moreover, it is unclear how many hypotheses should be generated in such a way, before moving to the validation stage.
For example, consider large-scale physics experiments, where a scientific process is to first collect large amounts of data, and then explore the data to find interesting structures, and propose physical theories.
Here, the process of investigating the data and coming up with theories of physics is extremely expensive.
In this case, our proposal allows scientists to only invest effort in this process while they still have not derived a true hypothesis.
\end{remark}

{\bf Applications.}
The primary application of the Generic Holdout, as discussed above, is
to allow for adaptivity while preventing false discoveries --- in particular, as an alternative to pre-registration.
Here, our proposed method does not require the scientist to specify hypotheses
ahead of time, before analyzing the data.
Instead, the scientist only needs to specify how to determine if a hypothesis is significant or not.
Using the Generic Holdout, the scientist can use any data analysis method (valid, or not) on their exploration set, and as long as they check with the holdout mechanism before publishing, they will not publish a false hypothesis except with small probability.
Moreover, it is sample-efficient: they can ask up to exponentially-many false hypotheses to the holdout set, before a true hypothesis is confirmed.

Journals could even require that researchers submit their holdout set to the journal, without looking at it, and then journals implement the Generic Holdout mechanism themselves. That is, researchers (potentially interactively) submit hypotheses (with associated hypothesis tests) to the journals,
which respond with a single bit.

The Generic Holdout also naturally applies to any data analysis procedure that involves several steps, each of which needs to be validated. For example, suppose the data analyst would like to first check ``is the data well-clustered into 10 clusters?'', and then based on this, search for a good kernel embedding of the data, et. cetera.
The Generic Holdout allows for such procedures to be statistically sound, without requiring any understanding of the exact statistical properties of the analyst's queries.

Another application of the Generic Holdout is in fields where the existing scientific process appears to be working, but journals would like to have statistical soundness guarantees.
This is especially relevant when a large data set is collected once and made public, and many research groups subsequently investigate and publish findings about the data set (for example, as in Genome-Wide Association Studies~\cite{gwas-catalog, gwas}).
Here, the proposal is: journals request some holdout data from the group initially collecting the data set, and put it in a vault (without publishing it, or looking at it).
For every submitted study using the common data set, journals first do their usual review process. Then, when a paper passes their usual review, they do a final validation check on the holdout data.
In this setting, the Generic Holdout guarantees that the journal can validate exponentially-many true hypotheses, and the holdout is only extinguished once it catches several false hypotheses.
(Note, this is a complementary setting to the first application.)

{\bf Organization.}
In Section~\ref{sec:related}, we discuss lines of prior work on adaptivity in the sciences, related to our proposal.
We formally describe our scientific setup and goals in Section~\ref{sec:framework}, phrased in the language of \emph{statistical hypothesis testing}.
We provide an extended formal example of problems that arise due to adaptivity in data analysis in Section~\ref{sec:problems}.
In Section~\ref{sec:generic-holdout} we describe our proposed method, the Generic Holdout,
and formally state its statistical guarantees.
We include an example instantiation of our generic framework in Section~\ref{sec:example}, illustrating a common setting where exponentially-many hypotheses can be tested until several true ones are discovered.

\subsection{Related Works}
\label{sec:related}
There are many related works surrounding the problem of adaptivity in the sciences; we discuss and compare the most relevant ones below.

\paragraph{Reusable Holdout.}
The proposal most similar to ours is the Reusable Holdout \cite{reusable, reuse2},
which developed out of ideas from Differential Privacy~\cite{DP} and Adaptive Data Analysis~\cite{validity}.
The Reusable Holdout addresses a similar scientific problem --- preventing false discoveries in data analysis
--- and proposes a very similar methodology.
However, there are several key differences that allow us to improve on the Reusable Holdout in our setting.

The Reusable Holdout is a mechanism for interacting with the holdout set in (informally) the following way.
When the scientist proposes a hypothesis, the mechanism first checks if the hypothesis ``looks similar'' on the holdout and exploration sets
(i.e., if they have similar $p$-values, a measure of how well the hypothesis fit the data).
If they are indeed similar, the mechanisms essentially releases the $p$-value of the hypothesis on the exploration set, not involving the holdout data.
If they are very different, the mechanism releases a noisy $p$-value on the holdout set.
This mechanism leaks information about the holdout set (a noisy $p$-value) whenever the scientist proposes ``bad'' hypotheses,
which are overfit to the exploration data.
As a result, the Reusable Holdout can handle only quadratically-many ``bad'' hypotheses in the size of the holdout set.
In our setting, where the scientist may use an arbitrary exploration procedure to generate hypotheses, many of the proposed hypotheses may in fact be ``bad'', and the Reusable Holdout would quickly become unusable.
In contrast, our method allows for up to an exponential number of ``bad'' hypotheses,
as long as the scientist stops after discovering a few true ones.

We stress that the technical details of the Generic Holdout are not novel (e.g., the \textsc{SparseValidate} mechanism of \cite{reuse2} is essentially the complement of our mechanism), but we believe our formalization of the scientific problem is meaningful,
and our proposal cleanly solves this problem.

As an aside, note that the Reusable Holdout also releases estimated $p$-values of hypotheses, while the Generic Holdout
only releases binary responses. However, to solve the scientific problem of
preventing false discoveries, releasing binary responses is sufficient.
Moreover, if we would like $p$-values for the validated hypotheses, we can estimate these by simply keeping another small holdout set.
Finally, the Generic Holdout allows for testing a much more general class of
hypotheses than the Reusable Holdout,
and does not require any specialized analysis per hypothesis class.

\paragraph{Adaptive Data Analysis.}
The recently-developed theory of Adaptive Data Analysis \cite{reuse2, validity, stability, HU}
also addresses issues of how to do valid, adaptive science.
At a high level, the goal of Adaptive Data Analysis is much more ambitious than our goal:
there, the goal (informally) is to address the question of how to form good, generalizable hypotheses based on data.
In contrast, our goal is merely to prevent a scientist from being convinced of
false hypotheses; we do not give a procedure for deriving true hypotheses in the
first place.

More formally, in Adaptive Data Analysis we have some underlying distribution $D$ on the universe $U$,
and the scientist would like to (approximately) know the result of statistical queries
$\E_{X \sim D}[\phi_i(x)]$ for some adaptively-chosen sequence of queries $\phi_i : U \to [0, 1]$.
The mechanism has access only to samples $X_1, \dots X_n \sim D$,
and must answer the queries with an estimate $\hat{\mu_i} \approx \E_{X \sim D}[\phi_i(X)]$.
Thus, a scientist only interacting with the data through such a mechanism will always receive answers close to the
true answers on the distribution, and thus will not generate hypotheses which are overfit to the data.

The tools developed in this area give computationally-efficient mechanisms for providing such estimates.
However, due to the strong guarantees provided, these mechanisms can only correctly answer
quadratically-many queries in the size of the data set
\cite{stability}.
Moreoever, it is computationally intractable to answer more than polynomially-many queries in this setting \cite{HU}.

We note that the Generic Holdout is well-suited to be used in conjunction with the methods of Adaptive Data Analysis.
That is, we imagine a scientist using the tools of Adaptive Data Analysis (amongst possibly other methods)
to generate hypotheses using the exploration set,
and using the Generic Holdout to confirm them before publication.

\paragraph{Inference after Model Selection}
This recent line of work ~\cite{model1, model2} focuses on a specific kind of analysis procedure,
which proceeds in two stages (model selection, and inference given the model).
For example, if the analyst first selects influential variables via $L_1$-regularized regression
(``model selection'') and then forms hypotheses based on these variables (``inference'').
Adaptivity arises as a problem here for the same reason,
since model-selection and inference are both performed on the same data set, and thus the hypotheses are dependent on the data they are tested against.
This is essentially ``2 rounds of adaptivity'' in our setting.
For specific kinds of data distributions and model-selection procedures, these works are able to
precisely analyze how the hypotheses depend on the data,
and thus give bounds on the performance of the overall procedure.

Our proposal is more general, in that it allows for multiple stages of adaptivity, each stage of which could be arbitrary.
For example, our proposal would allow for model-selection in multiple stages, each of which needs to be validated with respect to the population distribution (e.g., ``find a good embedding, then select influential variables, then cluster according to these...'').
Moreover, our proposal makes no assumptions on the data distribution or hypothesis class,
and can handle cases that may be hard to fully understand in the ``inference after model selection'' framework.
Of course, for specific cases that can be understood, this framework could lead to tighter results than our generic framework.

\paragraph{Adaptive FDR Control.}
There is a related line of work that is interested in controlling the False-Discovery-Rate (FDR)
of hypothesis testing.
The setting here is as follows.
We have a fixed large set of hypotheses, and we want to simultaneously test all of them
on the same data set, while bounding the False-Discovery-Rate:
the fraction of false hypotheses among all hypotheses that passed validation
(i.e., the ``false discoveries'' among all discoveries).
The scientific motivation here is often that we want to prune our hypotheses to a small set of ``interesting'' ones, on which we will then conduct further independent testing.
For example, if we are interested in finding genes which cause a disease,
we may first test all the hypotheses ``gene X is correlated with the disease'' for every value of gene X,
using a testing procedure with bounded FDR.
Then, among the returned hypotheses, we can do further experiments to determine their effect (say, looking physically at the mechanism of the gene expression).
Here, controlling the FDR is important, since we do not want to invest too many resources into experiments which are likely to be null.

There are various proposed methods for controlling FDR in different settings~\cite{BH,knockoffs}, and in particular, recently
there have been proposals to control FDR by adaptively deciding the order in which to test hypotheses, based on the results of past hypothesis tests~\cite{adaPT, fdr1, fdr2}. This could potentially yield more \emph{powerful} tests, i.e. tests that are more likely to discover true hypotheses.

These works on [adaptive] FDR control operate in a different setting from our work,
first because their notion of ``adaptive'' is different (the large set of hypotheses is usually assumed to be fixed beforehand),
and second because they are interested in a different notion of error
(the controlling the false-discovery rate, instead of preventing false discoveries overall).
In the statistical terminology, our proposal controls the ``family-wise error rate (FWER)'' instead of the FDR.

\section{The Scientific Framework}
\label{sec:framework}
In this section, we define our scientific framework, in the language of \emph{hypothesis testing}.

There exists some universe $U$ and underlying true distribution $D$ on $U$, specified by Nature.
(For example, $U$ could be genomic sequences, and $D$ the true distribution of human genome.)

The scientist can form \emph{hypotheses} $H \in \H$ about the true distribution (eg, ``gene X is correlated to disease Y'').
Each hypothesis corresponds to a partition of set of all distributions into a Null class $\cD_{H}^{Null}$ and an Alternative class $\cD_H^{Alt}$.
(The Null class defines distributions where the hypothesis is false, and the Alternative where the hypothesis is true).

For each hypothesis $H$ in the hypothesis class $\H$, we have a \emph{hypothesis test} $\Test^{(n)}_H: U^n \to \{0, 1\}$ which takes $n$ independent samples from a distribution and is supposed to accept under distributions in $\cD_H^{Alt}$ and reject under those in $\cD_H^{Null}$.
The false-positive probability of each test is known as its \emph{$p$-value}, and is given by
$$p := \sup_{D \in \cD_H^{Null}} \left\{ \Pr_{X_1, \dots X_n \sim D}[\Test_H(X_1, \dots X_n) \text{ accepts}] \right\}$$

In classical (non-adaptive) science, the scientific process is:
we first fix some hypothesis $H$, and then collect independent samples
$X_1, X_2, \dots X_n \sim D$ from the true distribution $D$,
and run the hypothesis test $\Test^{(n)}_H(X_1, \dots X_n)$.

For a single fixed hypothesis, we are usually interested in controlling the
false-positive probability of the hypothesis test.
This gives evidence for believing in hypotheses which pass the hypothesis test, in the following sense: Suppose a hypothesis test for hypothesis $H$ has false-positive probability $p \ll 1$. Then, if the hypothesis were false, our experimental procedure would have invalidated it with large probability $(1-p)$.

The setting where we have a fixed set of hypotheses, and want to test them all simultaneously,
is known as \emph{multiple hypothesis testing}.
In this setting, we could want to control different notions of error --- for example,
controlling the overall probability of confirming a false hypothesis,
or controlling the fraction of false hypotheses among confirmed hypotheses.
Throughout this work, we will consider controlling the overall probability of confirming a false hypotheses (and further, our hypotheses will be generated \emph{adaptively}).

In particular, we consider the general adaptive scientific process as follows.
The scientist first collects a data set $X_1, X_2, \dots X_n \sim D$ of $n$ independent samples from $D$.
Then, the scientist is interested in exploring the data set to find true hypotheses, and will eventually propose a hypothesis (or small set of hypothesis) that s/he believes to be true.
\emph{We would like to guarantee that the finally proposed hypotheses are in fact true} --- that is, we want to bound the false-positive probability of the proposed hypotheses.

The Generic Holdout is a general, sample-efficient method to achieve this.

\subsection{The Problem with Adaptivity}
\label{sec:problems}
In this section, we give an extended formal example that illustrates the problem of adaptivity in data analysis
(a version of what is known a ``Freedman's paradox'' \cite{paradox}.)

Naively, if we collect a data set, form a hypothesis based on it, and then test the hypothesis on the same data set, we lose all guarantees of correctness.
This is essentially because if we are allowed to adapt to our data set (and choose among many hypotheses), we can easily ``overfit'' to our data set, and find some hypothesis that is true about the data but not true in Nature.
As an informal example, say we collect data on a set of 20 random people. Let their set of names be $S$.
Then we form the hypothesis ``At least 99\% of people have names in S.''
Clearly this hypothesis is well-supported by the data, but entirely false. Moreover, this hypothesis would be correctly rejected if it were formed \emph{a priori}, and tested on an independent set of people.

The above problem still exists if we do not look at the data set directly, but we are allowed to \emph{adaptively} choose hypotheses to test.
That is, as a scientist we are not committed to a set of hypotheses beforehand, but rather we
are interested in exploring the data set to find interesting structures.
So we will first test some hypothesis $H_0$ against the data set, and then seeing the results of this test (say, its $p$-value), we pick another hypothesis $H_1$ to test, and so on.
In the example below, we will see that this can easily lead to a scientist being convinced of a false hypothesis $H_k$,
which appears to be true on the data set (\textit{i.e.} passes validation with low $p$-value).
Roughly what happens is, the scientist will test a series of ``weak'' hypotheses, and seeing the results of these hypotheses tests, will combine them into a
single ``strong'' hypothesis which is over-fit to the data set.

{\bf Formal Example.}
Let us consider the universe $U := \R^{d+1}$, and distributions over $(x_1, \dots x_d, y) \in \R^{d+1}$

We will form a sequence of hypothesis $H_i$.
Each hypothesis is of the form: $y$ is positively correlated with $\langle w, x\rangle$ for $||w||_2 = 1$.
That is, each hypothesis $H^{(w)}$ is specified by $w$, and the Alternative class for $H^{(w)}$ corresponds to distributions on $(\vec{x}, y)$ for which $\E[y \cdot \langle w, x \rangle] > 0$.
(The Null class for $H^{(w)}$ is the complement of the Alternative class).

Note that the distribution where $(\vec{x}, y)$ are i.i.d. Gaussians $\mathcal{N}(0, 1)$ belongs to the Null class for all hypotheses. Call this distribution the ``Global Null.''

For a single, \emph{a priori} fixed hypothesis $H^{(w)}$, it is sufficient to take $n = O(\log(1/p))$ independent samples from the distribution in order to  test this hypothesis with false-positive probability $p$.
That is, the hypothesis test for $H^{(w)}$ takes $n$ samples $\{(x^{(i)}, y^{(i)})\}$, and tests if the empirical correlation $\left(\frac{1}{n}\sum_i y^{(i)} \langle w, x^{(i)} \rangle\right) > 1$.
This test has $p$-value $p$, meaning that under any distribution from the Null class of $H^{(w)}$, this test rejects except with probability $p$ (said another way, $p$ is the ``false-positive'' probability).

Similarly, for any \emph{a priori fixed} set of $k$ hypotheses, it is sufficient to take $n = O(\log(k/p))$  samples. In statistical parlance, this is equivalent to the ``Bonferroni Procedure'', i.e. the Union Bound, which says that to test $k$ fixed hypotheses simultaneously with error level $p$, one should test each individual hypothesis using at level $(p/k)$.

Now, suppose we are in the Global Null distribution, and consider the following scientist who is trying to find a positive hypothesis in the class defined above.
We will make only $k= d+1$ queries total, so we decide to take $n = O(\log(d/p))$ samples from the distribution (this is \emph{incorrect} as we will see, since it assumes our queries were fixed in advance).
For the first $d$ queries, the scientist tests the hypotheses $H^{(w_0)}, H^{(w_1)}, \dots H^{(w_d)}$ for $w_i = \vec e_i$ the $i$-th standard basis vector.
Knowing the $p$-values from these tests, the scientist knows the empirical correlations $c_i := \tilde{\E}[y x_i] := \frac{1}{n} \sum_j  y^{(j)} x^{(j)}_i$ between each of the coordinates $x_i$ and $y$ on the samples.
Each of these empirical correlations will have magnitude $\left| \tilde{\E}[y
x_i] \right| \gtrapprox 1/\sqrt{n}$ in expectation.
Now for the final query, the scientist checks the hypothesis $w^* := \frac{1}{\sqrt{d}}\text{sign}(c)$.
This has empirical correlation $\tilde{E}[y \langle w^*, x \rangle] > \frac{\sqrt{d}}{\sqrt{n}}$ by construction, since we sum all the coordinate-wise correlations. Note that with our choice of $n$, we have
$\frac{\sqrt{d}}{\sqrt{n}} \gg 1$, meaning this hypothesis test passes, even though we were in the Null distribution.

\textbf{Conclusions.} The above shows that methods to do hypothesis-testing with a fixed set of hypothesis (eg, controlling the $p$-values using the ``Bonferroni Procedure''/union bound) can fail catastrophically when these hypotheses are chosen adaptively, knowing the results of previous hypothesis tests.
In particular, a method for \emph{a priori} testing may require exponentially more samples to be correct for adaptive testing.
Note that this counterexample continues to hold if hypotheses are tested using cross-validation (ie, each hypothesis is tested on a different random subset of the data set.)

\begin{remark}
Looking closer at the above example, what happened is that the scientist first tested many ``weak'' hypotheses, which failed validation, but then combined the results of these weak hypotheses into a ``strong'' hypothesis, which passed validation. The Generic Holdout prevents such failures, by not releasing any additional information about weak hypotheses which do not validate.

The first (naive) example discussed in this section is a trivial manifestation of the problem with adaptivity, where the scientist is ridiculously malevolent. The second example, however, is much more enlightening and could serve as an abstraction for a mistake done by an honest, yet not careful enough scientist!
\end{remark}

\section{Proposed Method: The Generic Holdout}
\label{sec:generic-holdout}

We propose the following scientific methodology (the ``Generic Holdout'').
\begin{enumerate}
		\item Take $n$ independent samples, and partition them into a \emph{exploration set} and a \emph{holdout set}.
	\item Set aside the holdout set and never look at it directly.
	\item Use the exploration set freely, in any way, to adaptively explore and propose hypotheses.
	\item When you have a plausible hypothesis $H$ in hand,
	prepare a hypothesis test for $H$, with desired $p$-value, and apply this test on the holdout set, observing only the outcome of the test (whether it rejects the null hypothesis or not).\\
	\emph{It is crucial that the binary outcome of the test is the only information observed from the holdout set. One must not observe more information, for example the actual $p$-value of the test on the holdout set.}
	\item You are free to adaptively repeat steps 3 and 4 to discover small number of true hypotheses.
\end{enumerate}

\subsection{Statistical Guarantees of the Generic Holdout}

Here we set up some notation regarding the methodology proposed above that will be useful in further discussion. We consider some universe $U$, and collect a data set $U_1, \ldots U_n \in U$, assumed to be a sequence of independent samples from some underlying \emph{population} --- probability distribution $\cD$ over $U$. We partition it into $U_1, \ldots U_h$ --- the \emph{holdout set}, and $U_{h+1}, \ldots U_n$ --- the \emph{exploration set}. The scientist uses exploration set to propose hypotheses $H_1, H_2, \ldots, H_s$ together with tests $\Test_{H_i}$ for each of them --- each of hypotheses $H_i$ can depend arbitrarily on the exploration set, and on results of all the previous tests.

When a scientist commits to use this mechanism until the number of validated hypotheses exceeds specific threshold $k$, or number of hypotheses tested altogether exceeds some specific threshold $s$, we wish to give strong statistical guarantee on the false positive rate for validated hypotheses. We focus on the scenario where $k \ll s$, i.e. we wish to discover only several true hypotheses, and we show that in this situation, the necessary size of the holdout set to achieve a fixed false positive probability scales gracefully with the total number of trials $s$.

The choice of the size of \emph{exploration set} is not relevant to this discussion; clearly larger exploration set makes it easier for the scientist to produce valid hypotheses in the first place, but the acquisition and maintenance of larger data set is often related with additional costs.

We will now formally define the adaptive hypothesis selection mechanism.
\begin{definition}[Adaptive hypothesis selection]
\label{def:hyp-selection}
		We define the \emph{$k$-bounded adaptive hypothesis} selection to be a sequence of (randomized) functions $\Alg_1, \ldots \Alg_s$ such that $\Alg_i : U^{n - h} \times \{0, 1\}^{s-1} \to \mathcal{H} \cup \{ \bot \}$. We think of $\Alg_i$ as a randomized scheme specifying how to pick $H_i$, based on the exploration set, and results of all previous hypotheses tests. We assume that after finding $k$ valid hypotheses, the researcher stops exploration, i.e. $\Alg_i(\vec{U}, x_1, \ldots x_{i-1}) = \bot$ whenever there are $k$ ones among $x_1, \ldots x_{i-1}$.
\end{definition}

Our main theorem quantifies the false-positive guarantees of the generic holdout test.
\begin{theorem}
		\label{thm:main}
		Consider a sequence of hypotheses $H_1, H_2, H_3, \ldots H_{s} \in \mathcal{H} \cup \{\bot\}$ generated as in Definition~\ref{def:hyp-selection},
		that is, the scientist adaptively generates up to $s$ hypotheses, and stops once $k$ hypotheses are confirmed.
		If the $p$-value of each test $H_i$ is bounded by $p$, then probability of false discovery in this workflow is bounded by $s^k p$. More formally,
		\begin{align*}
				\forall D \in \mathcal{D},\
			    &\Pr[ ~\textnormal{Scientist accepts a false hypothesis}~ ]\\
			    &=
				\Pr_{U_1, \ldots U_n \sim D}\left(\exists i \leq s,\  D \in D_{H_i}^{Null} \land \Test_{H_{i}}(U_1, \ldots U_h) = 1\right)
			    \leq	s^k p.
		\end{align*}
\end{theorem}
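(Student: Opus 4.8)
The plan is to exploit the one piece of independence built into the setup: the holdout sample $(U_1,\dots,U_h)$ is independent of the exploration sample $(U_{h+1},\dots,U_n)$ and of all internal randomness of the $\Alg_i$ and the tests. First I would condition on the exploration set and on that internal randomness. Once these are frozen, each $\Alg_i$ becomes a deterministic function of the previous answer bits $(x_1,\dots,x_{i-1})$, so the entire interaction collapses to a fixed binary decision tree: each node $v$ is indexed by a history string, carries a determined hypothesis $H_v$ and test $\Test_v$, and the edge taken out of $v$ is dictated by the bit $\Test_v(U_1,\dots,U_h)$ computed on the (still random, independent) holdout.

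With the tree fixed, the realized path is a function of the holdout alone, and a false discovery is exactly the event that this path passes through some node $v$ with $H_v$ false (i.e.\ $D \in \cD_{H_v}^{Null}$) whose test accepts. The heart of the argument is a union bound over the nodes that can ever be reached. Since the researcher halts after $k$ accepting tests, a node is reachable only if its history contains fewer than $k$ ones; let $\mathcal{T}$ denote this set of \emph{query nodes}. For $v \in \mathcal{T}$ let $A_v$ be the event ``the path reaches $v$, $H_v$ is false, and $\Test_v$ accepts.'' I would then observe that a false discovery occurs if and only if $\bigcup_{v\in\mathcal{T}} A_v$ occurs, and that $A_v \subseteq \{H_v \text{ false}\}\cap\{\Test_v \text{ accepts}\}$. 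Dropping the ``reach $v$'' constraint only enlarges the event, so, conditionally on the frozen tree, $\Pr[A_v]\le p$ whenever $H_v$ is false (and $A_v=\emptyset$ otherwise), directly from the definition of the $p$-value applied to the independent holdout.

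It remains only to count $|\mathcal{T}|$ and assemble the pieces. Histories of length at most $s-1$ with at most $k-1$ ones number
\[
\sum_{\ell=0}^{s-1}\sum_{j=0}^{k-1}\binom{\ell}{j} \;=\; \sum_{m=1}^{k}\binom{s}{m}
\]
by the hockey-stick identity, and a short induction on $k$ (using $\binom{s}{k+1}\le s^{k+1}/2 \le s^k(s-1)$ for $s\ge 2$) shows this is at most $s^k$. Hence, conditionally on the tree, $\Pr[\text{false discovery}] \le |\mathcal{T}|\,p \le s^k p$, and since this bound is uniform over the conditioning, taking expectation over the exploration set and the internal randomness preserves it.

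The step I expect to be the main obstacle is justifying the union bound despite the fact that the set of nodes actually visited is itself a function of the holdout: the resolution is precisely the reduction above, namely that after fixing everything except the holdout the tree is static, so the data-dependence lives only in \emph{which} of the fixed events $A_v$ fire, making the union bound over the frozen family $\{A_v\}_{v\in\mathcal{T}}$ legitimate. The remaining pieces are routine: the counting bound via hockey-stick plus induction, and the observation that randomized tests need no special care, since the $p$-value already integrates over the test's randomness and so I would simply not condition on it when invoking the bound $\Pr[\Test_v \text{ accepts}]\le p$.
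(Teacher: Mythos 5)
Your proposal is correct and follows essentially the same route as the paper: condition on the exploration set and the analyst's internal randomness to make the selection rules deterministic functions of the answer bits, union-bound over the at most $s^k$ hypotheses that could ever be queried (the paper bounds the union of the ranges of the $\Alg_i$, which is exactly your set of decision-tree nodes with fewer than $k$ accepting bits in their history), and then average the uniform conditional bound over the conditioning. Your explicit decision-tree framing, the hockey-stick count $\sum_{m=1}^{k}\binom{s}{m}\leq s^k$, and the remark about not conditioning on test randomness are merely more detailed renderings of the paper's argument, not a different approach.
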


The proof of this theorem is elementary, before we proceed with it let us state explicitly important interpretation of its statement.

\textbf{Discussion.} In order to achieve some target statistical significance, say $p_0 = 0.05$, over the whole process described above, we want to use holdout set such that the guaranteed false-positive probability $p$ for each specific test $H_i$ is of the order of $p_0 \left/ s^{k}\right.$. Often for standard statistical tests the required samples size scales like $\Oh(\log \frac{1}{p})$ with the desired $p$-value, and as such it is enough to use the holdout set of size $\Oh(k \log s)$.

To put it differently, once we have fixed holdout set of size $h$, desired $p$-value $p_0$ and bound $k$ on the number of discovered ``true'' hypotheses (after which we stop using collected holdout set for verification), we can issue $s = 2^{\Omega(h/k)}$ queries in the workflow described above, and still have confidence $p_0$ on the validity of all discovered hypotheses.

\textbf{Remark.}
For $k=1$, this bound exactly matches the ``Bonferroni Procedure'' (ie, the union bound) for testing a fixed set of $s$ non-adaptive hypotheses.

\textbf{Remark.} The statement of the theorem remains unaffected in the complementary setting, where we expect number of rejected hypotheses to be bounded by $k$. Here the scientist. Here, we expect scientist to use the mechanism until at most $k$ hypotheses are rejected, or at most $s$ queries are issued. In this scenario, we can again bound the probability of any false discovery by $s^{k} p$.

\textbf{Remark.} Note that simply providing a mechanism for validating hypotheses with small probability of
false discoveries is trivial: the mechanism can just respond that every hypothesis tested is false.
We would like mechanisms to also be \emph{useful}, in that they allow for true discoveries.
One possible formalization of usefulness guarantees of the Generic Holdout, for $k=1$, is as follows.
Intuitively, we want to say that a scientist who follows a strategy that eventually proposes a valid hypothesis, will discover this hypothesis while using the Generic Holdout. More formally, for a hypothesis $H \in \mathcal{H}$, distribution $D \in \cD$ and some associated test $\Test_H$ we define $p_{H, D} := \Pr_{X_1, \ldots X_h \sim D}(\Test_H(X_1, \ldots X_m) = 1)$. For $\Alg_1, \ldots \Alg_s$ as in Definition~\ref{def:hyp-selection}, and some distribution $D \in \cD$, we have
\begin{equation*}
    \Pr_{
    \substack{U_1, \dots U_h \sim D \\
    H_1, \dots H_s \gets \Alg(U_1, \dots U_h)}}
    (\exists i\leq s,\ \Test_{H_i}(U_1, \ldots U_h) = 1) \geq
    \E_{
    \substack{U_1, \dots U_h \sim D \\
    H_1, \dots H_s \gets \Alg(U_1, \dots U_h)}}
    \max_i p_{H_i, D}.
\end{equation*}

\begin{proof}[Proof of Theorem~\ref{thm:main}]
	Observe that, as $U_1, \ldots, U_h$ are assumed to be independent from $U_{h+1}, \ldots U_n$, and the internal randomness of the scientist. Let us, for now, assume that the selection of the $i$-th hypothesis depends only on the results of all previous tests $\Alg_i : \{0, 1\}^{i - 1} \to \mathcal{H} \cup \{\bot\}$ in a deterministic way.

	Note that for a fixed sequence $\Alg_1, \Alg_2, \ldots \Alg_s$ as above (i.e. we assume that $\Alg_{i}(x_1, \ldots x_{i-1}) = \bot$ if there are at least $k$ ones among $x_1, \ldots x_{i-1}$), there is at most $s \sum_{i < k} \binom{s}{k} \leq s^{k}$ hypotheses that will ever be tested by this algorithm --- this is a bound on the total range of all those functions. Consider the set $\tilde{\mathcal{H}} \subset \mathcal{H}$ given by union of all the ranges of $\Alg_i$. We know that $|\tilde{\mathcal{H}}| \leq s^{k}$, and moreover if we fix $D \in \cD$, we have
	\begin{multline*}
			\Pr_{U_1, \ldots U_h \sim D} (\exists i \leq s,\ H_i \in \cD_{H_i}^{Null} \land \Test_{H_i}(U_1, \ldots U_h) = 1) \\
		\begin{array}{l}
				\displaystyle
			 \leq \Pr_{U_1, \ldots, U_h \sim D} (\exists H \in \tilde{\mathcal{H}},\ \cD_{H}^{Null} \land \Test_{H_i} = 1) \\
			 \leq \displaystyle |\mathcal{\tilde{H}}| p\\
			 \leq \displaystyle s^{k} p.
		\end{array}
	\end{multline*}

	For general case, where $\Alg_i$ is a randomized function that depends also on the exploration set $U_{h+1}, \ldots U_{n}$, we can use the linearity of expectation --- conditioning on any deterministic realization of $\Alg_i$, and the value of exploration set $U_{h+1}, \ldots U_{n}$, the statement is true by the argument above, and therefore it is true, in expectation over those random variables.
\end{proof}
\subsection{Example: Gapped Empirical Losses}
\label{sec:example}

In many natural situations, the hypothesis test takes a special form: thresholding an \emph{empirical loss} evaluated on the sample at hand. Our general framework specializes to this case, and here we can give quantitative bounds on the number of samples $n$ required to bound false-positive rate.

Specifically, suppose that with each hypothesis $H$ we have some associated loss function $\ell_H: U \to [-1, 1]$, such that
$$\forall D \in \cD_H^{Null}: \E_{x \sim D}[\ell_H(x)] \leq 0,$$
moreover, suppose the hypothesis test is simply
\begin{equation}
    \Test_H(X_1, \dots X_h) = \mathbf{1}\{(\frac{1}{h}\sum_i \ell_H(X_i)) > 1/2 \}.
    \label{eq:threshold}
\end{equation}

In this case, we give quantitative bounds on the number of samples $n$ required for constant statistical confidence on validated hypotheses within the Generic Holdout framework.
\begin{theorem}
\label{thm:gap-loss}
If the scientist makes $s$ adaptive hypothesis test queries (generated as in Definition~\ref{def:hyp-selection}) on the holdout set, including at most $k$ that are confirmed to be valid, where each hypothesis test is of form \eqref{eq:threshold} then using holdout set of size $h = O(t\log(k/p_0))$ is sufficient to guarantee that the probability of confirming a false hypothesis is at most $p_0$.
\end{theorem}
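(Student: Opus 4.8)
The plan is to obtain this as a quantitative corollary of Theorem~\ref{thm:main}: that theorem already absorbs all of the adaptivity, reducing the problem to controlling the $p$-value of a single test of the form~\eqref{eq:threshold}, after which the only remaining ingredients are a concentration inequality and a short calculation.

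First I would bound the worst-case $p$-value of an individual thresholding test. Fix a hypothesis $H$ and any distribution $D \in \cD_H^{Null}$. Under such $D$ we have $\E_{x \sim D}[\ell_H(x)] \le 0$, and the test fires (a false positive) precisely when $\frac{1}{h}\sum_i \ell_H(U_i) > 1/2$; hence a false positive forces the empirical mean to exceed its expectation by at least $1/2$. Since each $\ell_H(U_i) \in [-1,1]$ and the holdout points are i.i.d.\ draws from $D$, Hoeffding's inequality yields
\begin{equation*}
\Pr_{U_1, \dots, U_h \sim D}\left[ \frac{1}{h}\sum_i \ell_H(U_i) > \frac{1}{2} \right] \le \exp\left(-\frac{h}{8}\right).
\end{equation*}
Crucially this bound is distribution-free (it uses only boundedness of $\ell_H$ and the sign of its mean), so it holds uniformly over the whole null class; taking the supremum over $D \in \cD_H^{Null}$ shows that every test of the form~\eqref{eq:threshold} has $p$-value $p \le e^{-h/8}$.

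Next I would feed this into Theorem~\ref{thm:main}, which guarantees that the total probability of confirming any false hypothesis is at most $s^k p \le s^k e^{-h/8}$. Requiring $s^k e^{-h/8} \le p_0$ and solving for $h$ gives $h \ge 8\,(k \ln s + \ln(1/p_0))$, so a holdout of size $h = O(k\log s + \log(1/p_0)) = O(k \log(s/p_0))$ suffices, which is the claimed scaling.

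The calculation is routine, so there is no genuinely hard step; the one point deserving care is that the $p$-value bound must hold against the \emph{worst} distribution in $\cD_H^{Null}$ rather than a single fixed $D$, since the definition of the $p$-value takes a supremum over the null class. This is exactly why I would invoke Hoeffding in its distribution-free form: the guarantee $\E[\ell_H] \le 0$ is all that is used, so the same exponential tail bound applies simultaneously to every null distribution. With that observed, the adaptivity is handled entirely by Theorem~\ref{thm:main}, and nothing further is required.
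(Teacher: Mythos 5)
Your proposal is correct and matches the paper's intended argument: the paper gives no separate proof of Theorem~\ref{thm:gap-loss}, but its discussion following Theorem~\ref{thm:main} sketches exactly your route --- a distribution-free Hoeffding bound gives each test of form \eqref{eq:threshold} a worst-case $p$-value of $e^{-\Omega(h)}$ (your $e^{-h/8}$ is the right constant for losses in $[-1,1]$ and threshold $1/2$), and Theorem~\ref{thm:main} then absorbs all the adaptivity via the $s^k p$ bound, yielding $h = O(k\log(s/p_0))$. Note that the theorem's printed bound $h = O(t\log(k/p_0))$ contains a typo (the symbol $t$ is undefined and the roles of $k$ and $s$ appear garbled); your derived scaling $O(k\log s + \log(1/p_0))$ is the intended one, consistent with the paper's remark that a holdout of size $h$ supports $s = 2^{\Omega(h/k)}$ queries.
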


One concrete realization of this class of hypothesis tests is following. Consider the class of multivariate normal distributions with covariance matrix bounded in spectral norm by $1$, and the problem of finding a linear predictor $h(x) := \langle w, x \rangle$ that is correlated with target feature $y$.  With each vector $w$ of unit norm, we can consider associated loss function $\ell_w(x, y) =  \mathrm{truncate}_{[-1, 1]}(\langle w, x \rangle)$, where $\mathrm{truncate}_{[-1, 1]}(x) = \min(\max(x, -1), 1)$. Hypotheses of this form can be generated by using linear regression on the exploration set, and then verified on the holdout set. Theorem~\ref{thm:gap-loss} states that we can validate exponentially many hypotheses (with respect to the size of given holdout set), as long as we stop upon discovering few true hypotheses of this form.

\section{Acknowledgements}
We would like to thank Madhu Sudan, Boaz Barak, Lucas Janson, Jonathan Shi, and Thibaut Horel for helpful discussions during the course of this work.

\bibliographystyle{plain}
\bibliography{refs}

\end{document}